\documentclass{eptcs}


\usepackage{alltt}
\usepackage[dvips]{graphicx}
\usepackage{amsmath}
\usepackage{amsthm}
\usepackage{macros,xspace}
\newcommand{\ito}{\Longrightarrow}

\newcommand{\rto}{\longrightarrow}
\newcommand{\Pin}{{\sc pin}\xspace}
\newtheorem{definition}{Definition}[section]

\newenvironment{program}{%
 \begin{quote}\begin{alltt}}
{\end{alltt}\end{quote}}


\title{An Implementation of Nested Pattern Matching
  in Interaction Nets}

\author{Abubakar Hassan
  \institute{Department of Computer Science \\
    University of Sussex \\ Falmer, Brighton, \\U.K.}
  \email{abubakar.hassan@sussex.ac.uk}  
  \and
  Eugen Jiresch
  \institute{Theory and Logic Group \\
  	Insitute of Computer Languages \\
  	Vienna University of Technology \\
  	Vienna, Austria}
  \email{jiresch@logic.at}  
  \and 
  Shinya Sato
  \institute{Faculty of Econoinformatics \\ Himeji Dokkyo University \\
    5-7-1 Kamiohno, Himeji-shi\\ Hyogo 670-8524, Japan}
  \email{shinya@himeji-du.ac.jp}  
}

\begin{document}
\maketitle

\begin{abstract}
  Reduction rules in interaction nets
  are constrained to pattern match exactly one argument at a time. 
  Consequently, a programmer has to introduce  auxiliary rules
  to perform more sophisticated matches.
  In this paper, 
  we describe the design and implementation of
  a system for interaction nets which allows nested pattern matching
  on interaction rules. We achieve a system that provides convenient 
  ways to express interaction net programs without defining auxiliary
  rules.
  
\end{abstract}

\section{Introduction}
Interaction nets~\cite{Lafont90} were introduced over 10 years ago 
as a new programming paradigm based on graph rewriting.
Programs are expressed as graphs and computation is expressed
as graph transformation.
They enjoy nice properties such as locality of reduction,
strong confluence and Turing completeness. The definition
of interaction nets allows them to share computation: reducible
expressions (active pairs) cannot be duplicated. 
For these reasons, optimal and efficient $\lambda-$calculus 
evaluators~\cite{GonthierG:geoolr,LampingJ:algolc,MackieIC:yalyal}
based on interaction nets  have evolved.
Indeed, interaction nets have proved to be very useful for studying
the dynamics of computation. However, they remain fruitful only
for theoretical investigations. 

Despite that we can already program in
interaction nets, they still remain far from being 
used as a practical programming language. 
Drawing an analogy with functional programming,
we only have the $\lambda$-calculus without any high level language
constructs which provide programming comfort.
Interaction nets have a very primitive notion of 
pattern matching since only two agents can interact at a time.
Consequently, many auxiliary agents
and rules are needed to implement more sophisticated matches. 
These auxiliaries are implementation details and should be
generated automatically other than by the programmer. 

In this paper we take a step towards developing a richer 
language for interaction nets which facilitates nested pattern
matching. To illustrate what we are doing, consider the following
definition of a function that computes the last element of a list:
\begin{program}
lastElt (x:[]) = x
lastElt (x:xs) = lastElt xs
\end{program}
In this function {\tt []} is  a nested pattern in {\tt (x:[])}.
We cannot represent functions with nested patterns in interaction
nets. Hence, a programmer has to introduce auxiliary functions to 
pattern match the extra arguments.
\begin{program}
lastElt (x:xs) = aux x xs
aux x [] = x
aux x (y:ys) = lastElt (y:ys)
\end{program}

In our previous work~\cite{HaSa08} we defined a conservative extension of
interaction rules that allows nested pattern matching. The purpose of this
paper is to bring these ideas into practise:
\begin{itemize}
\item we define a programming language that captures the extended form 
of interaction rules;
\item we describe the implementation of these extended rules.
\end{itemize}
In~\cite{MaHaSa08} we defined a textual language for interaction nets (\Pin)
and an abstract machine that executes \Pin programs.
We take \Pin as our starting point and extend the \Pin language to allow
the representation of rules with nested patterns.

There has been several works that extend interaction nets in some way.
Sinot and Mackie's Macros for interaction nets~\cite{macros} are
quite close to what we present in this paper. They allow  pattern
matching  on more than one argument by relaxing the restriction of one 
principal port per agent. 
The main difference with our work is that their system does not
allow nested pattern matching. 
Our system facilitates nested/deep pattern matching
of agents. 

The rest of this paper is organised as follows:
In the Section~\ref{INs} we give a brief introduction of interaction nets.
In Section~\ref{inets} we define a programming language that allows
the definition of interaction rules with nested patterns.
In Section~\ref{implementation_overview} we give an overview of the
implementation of nested pattern matching. A more detailed explanation of the
algorithm is found in Section~\ref{wf_verification} (verification of
well-formedness) and Section~\ref{rule_translation} (rule translation). Finally,
we conclude the paper in Section~\ref{conclusion}.

\section{Interaction Nets} \label{INs}
We review the basic notions of interaction nets. 
See~\cite{Lafont90} for a more detailed presentation.
Interaction nets are specified by the following data:

\begin{itemize}
  
\item A set $\Sigma$ of \emph{symbols}. Elements of $\Sigma$ serve as
  \emph{agent} (node) labels. Each symbol has an associated arity $ar$
  that determines the number of its \emph{auxiliary ports}. If
  $ar(\alpha) = n$ for $\alpha \in \Sigma$, then $\alpha$ has $n+1$
  \emph{ports:} $n$ auxiliary ports and a distinguished one called the
  \emph{principal port}. 
  We represent an agent graphically as:
  \begin{net}{40}{45}
    \putagent{20}{20}{$\alpha$}
    \putDvector{20}{10}{10}
    \putline{12.6}{27.4}{-1}{1}{10}
    \putline{27.4}{27.4}{1}{1}{10}
    \puttext{20}{35}{$\cdots$}
    \put(2.6,38){\makebox(0,0)[br]{$x_1$}}
    \put(37.4,38){\makebox(0,0)[bl]{$x_n$}}
  \end{net}
  and textually using the syntax:
  $x_0 \sim \alpha(x_1, \ldots, x_n)$
  where $x_0$ is the principal port.

\item A \emph{net} built on $\Sigma$ is an undirected graph with agents
  at the vertices.  The edges of the net connect agents together at the
  ports such that there is only one edge at every port. 
  A port which is not connected is called a \emph{free port}. A set of free
  ports is called an \emph{interface}. A symbol denoting a free port is called
  a \emph{free variable}. 
  
\item Two agents $(\alpha,\beta)\in \Sigma\times\Sigma$
  connected via their principal ports form an
  \emph{active pair} (analogous to a redex).
  An interaction rule $((\alpha, \beta) \ito N) \in \IR$ replaces
  the pair $(\alpha, \beta)$ by the net $N$.  All the free ports are
  preserved during reduction, and there is at most one rule for each
  pair of agents. The following diagram illustrates the idea, where $N$
  is any net built from $\Sigma$.

\begin{net}{200}{40}
  \putalpha{20}{20}
  \putbeta{60}{20}
  \putRvector{30}{20}{10}
  \putLvector{50}{20}{10}
  \putline{12.6}{27.4}{-1}{1}{10}
  \putline{12.6}{12.6}{-1}{-1}{10}
  \putline{67.4}{27.4}{1}{1}{10}
  \putline{67.4}{12.6}{1}{-1}{10}
  \puttext{5}{23}{$\vdots$}
  \puttext{75}{23}{$\vdots$}
  \put(0,0){\makebox(0,0)[br]{$x_1$}}
  \put(0,40){\makebox(0,0)[tr]{$x_n$}}
  \put(80,0){\makebox(0,0)[bl]{$y_m$}}
  \put(80,40){\makebox(0,0)[tl]{$y_1$}}
  \puttext{105}{20}{$\ito$}
  \putbox{140}{0}{50}{40}{$N$}
  \putHline{130}{10}{10}
  \putHline{130}{30}{10}
  \puttext{135}{23}{$\vdots$}
  \putHline{190}{10}{10}
  \putHline{190}{30}{10}
  \puttext{195}{23}{$\vdots$}
  \put(125,5){\makebox(0,0)[br]{$x_1$}}
  \put(125,35){\makebox(0,0)[tr]{$x_n$}}
  \put(205,5){\makebox(0,0)[bl]{$y_m$}}
  \put(205,35){\makebox(0,0)[tl]{$y_1$}}
\end{net}
For present purposes, we represent this rule textually using
$\left< \alpha(x_1,\ldots,x_n) \join \beta(y_1,\ldots,y_n)\right> \ito N$.
\end{itemize}

We use the notation $N_1 \rto N_2$ for the one step reduction
and $\rto^*$ for its transitive and reflexive closure.
Interaction Nets have the following property~\cite{Lafont90}:
\begin{itemize}
\item \textbf{Strong Confluence}: Let $N$ be a net. 
  If $N \rto N_1$ and $N \rto N_2$ with $N_1 \neq N_2$,
  then there is a net $N_3$ such that $N_1 \rto N_3$ and $N_2 \rto N_3$.
\end{itemize}

In Figure~\ref{rules_for_lastelm} we give a simple example of an interaction
net system that computes the last element of a list.
We  can represent  lists  using the agents Cons
(\agentt{:}) of arity 2 and $\nil$ of arity 0. 
The first port of Cons
connects to an element of the list and the second port of 
Cons connects to the
rest of the list. The agent $\nil$ marks the end of the list.
An active pair between \agentt{Lst} and Cons rewrites to an
auxiliary agent \agentt{Aux} with it's principal port oriented
towards the second auxiliary port of Cons. This means that during
computation, \agentt{Aux} will interact with either a Cons agent
or a $\nil$ agent\footnote{The second auxiliary port of a Cons agent 
will be connected to either a Cons agent or a Nil agent.}. To
avoid \emph{blocking} the computation, we define rules for active pairs
(\agentt{Aux}, $\nil$) and  (\agentt{Aux}, Cons).
An active pair between \agentt{Aux} and $\nil$ rewrites to a single wire,
which  connects the agents at the auxiliary ports of \agentt{Aux}.
When paired with Cons, \agentt{Aux} is replaced by \agentt{Lst},
analogous to the recursive call of the \texttt{lastElt} function. The list
element which is connected to the first port of \agentt{Aux} is deleted, as it is not the last element of the
list. This is modeled by the agent $\epsilon$, which erases all other agents.

\begin{figure}[htb]
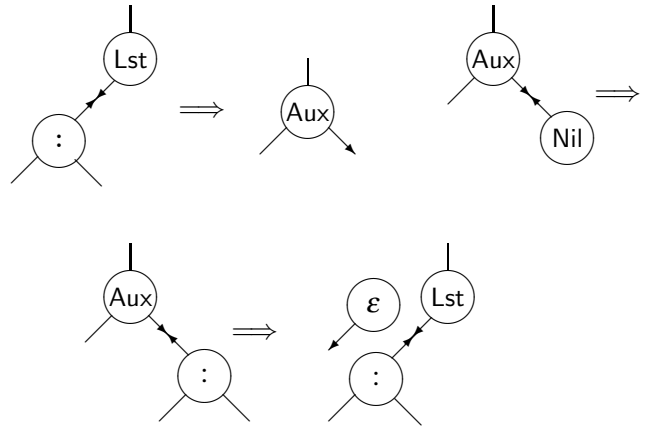

    \begin{net}{250}{165}
      \put(0,90){
        \begin{mnet}{102}{50}
          \putagent{43}{55}{\agentt{Lst}}
          \putagent{16}{24}{{\tt :}}
          \putline{8}{18}{-1}{-1}{10}
          \putline{22}{17}{1}{-1}{10}
          \putVline{43}{65}{10}
          \putline{37}{47}{-1}{-1}{15}
          \putvector{30}{40}{-1}{-1}{0}
          \putvector{30}{40}{1}{1}{0}         
          \puttext{70}{35}{$\ito$}
          \putagent{110}{35}{\agentt{Aux}}
          \putVline{110}{45}{10}
          \putline{102}{29}{-1}{-1}{10}
          \putvector{118}{29}{1}{-1}{10}
        \end{mnet}
      }
      \put(140,90){
        \putagent{43}{55}{\agentt{Aux}}
        \putVline{43}{65}{10}
        \putline{36}{48}{-1}{-1}{10}
        \puttext{90}{41}{$\ito$}
        \putVline{110}{10}{60}
        \putline{50}{48}{1}{-1}{15}
        \putvector{57}{41}{1}{-1}{0}
        \putvector{57}{41}{-1}{1}{0}
        \putagent{71}{25}{\nil}
      }
      \put(0,0){
        \begin{mnet}{102}{50}
          \put(0,0){
            \putagent{43}{55}{\agentt{Aux}}
            \putVline{43}{65}{10}
            \putline{36}{48}{-1}{-1}{10}
            \puttext{90}{41}{$\ito$}
            \putline{50}{48}{1}{-1}{15}
            \putvector{57}{41}{1}{-1}{0}
            \putvector{57}{41}{-1}{1}{0}
            \putagent{71}{25}{:}
            \putline{64}{18}{-1}{-1}{10}
            \putline{78}{18}{1}{-1}{10}
          }
          \put(120,0){
            \putagent{43}{55}{{\agentt{Lst}}}
            \putagent{16}{24}{:}
            \putline{8}{18}{-1}{-1}{10}
            \putline{23}{16.5}{1}{-1}{10}
            \putVline{43}{65}{10}
            \putline{37}{47}{-1}{-1}{15}
            \putvector{30}{40}{-1}{-1}{0}
            \putvector{30}{40}{1}{1}{0}
            \putagent{15}{52}{$\epsilon$}
            \putvector{8}{45}{-1}{-1}{10}
          }
        \end{mnet}
      }
    \end{net}
    \caption{Rules to compute last element of a list}
    \label{rules_for_lastelm}
\end{figure}
Figure~\ref{net_for_lastelm}  gives an example reduction sequence that
computes the last element of a list that contains just  one element: $[1]$. The
second port of \agentt{Lst} is free and thus acts as the interface of the net.
 \begin{figure}[htb]
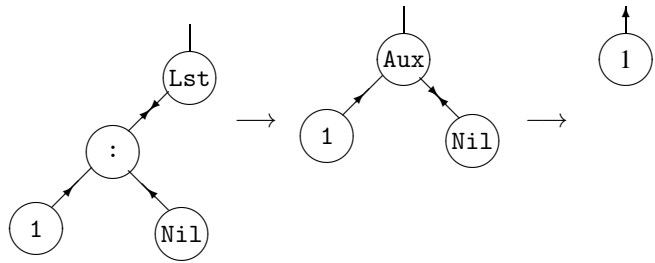

  \begin{small}
    \begin{net}{200}{80}
      \putagent{45}{52}{{\small {\tt Lst}}}
      \putagent{16}{24}{{\tt :}}
      \putline{8}{18}{-1}{-1}{15}
      \putline{22}{17}{1}{-1}{15}
      \putagent{-13}{-5}{{\tt 1}}
      \putagent{42}{-7}{{\tt Nil}}
      \putvector{1}{11}{1}{1}{0}      
      \putvector{29}{10}{-1}{1}{0}      
      \putVline{45}{62}{10}
      \putline{37}{47}{-1}{-1}{15}
      \putvector{30}{40}{-1}{-1}{0}
      \putvector{30}{40}{1}{1}{0}
      \puttext{70}{35}{$\rto$}
      \put(110,35){
        \putagent{16}{24}{{\tt Aux}}
        \putline{8}{18}{-1}{-1}{15}
        \putline{22}{17}{1}{-1}{15}
        \putagent{-13}{-5}{{\tt 1}}
        \putagent{42}{-7}{{\tt Nil}}
        \putvector{1}{11}{1}{1}{0}
        \putvector{29}{10}{-1}{1}{0}
        \putvector{29}{10}{1}{-1}{0}
        \putVline{16}{34}{10}
      }
      \puttext{180}{35}{$\rto$}
      \putagent{210}{59}{1}
      \putvector{210}{69}{0}{1}{10}
    \end{net}
    \caption{Example reduction sequence}
    \label{net_for_lastelm}
  \end{small}
\end{figure}
First, the active pair of \agentt{Lst} and Cons is rewritten,
introducing an \agentt{Aux} agent. Now the second rule is applied to the pair
$(\agentt{Aux} ,  \agentt{Nil})$, removing both agents and
connecting $\agentt{1}$ to the interface of the net.
As expected, this final net  contains only
the agent $\agentt{1}$, which is equivalent to the result of
\texttt{LastElt (1:[])}.

\paragraph*{Why Interaction Nets?}

Interaction nets are a generalisation of proof nets for linear
logic~\cite{GirardJY:linl}, in a similar way that term rewriting
systems are a generalisation of the $\lambda$-calculus.  Interaction
nets are an important model of computation for several reasons:

\begin{enumerate}
\item \emph{All} aspects of a computation are captured by the
rewriting rules---no external machinery such as copying a chunk of
memory, or a garbage collector, are needed. Interaction nets are
amongst the few formalisms which model computation where this is the
case, and consequently they can serve as both a low level operational
semantics and an object language for compilation, in addition to being
well suited as a basis for a high-level programming language.

\item Interaction nets naturally capture \emph{sharing}---active pairs
can never be duplicated. Thus only normal forms can be duplicated, and
this must be done incrementally. Using interaction nets as an
\emph{object} language for compilers has offered strong evidence that this
sharing will be passed on to the programming language being
implemented. One of the most spectacular instances of this is the work
by Gonthier, Abadi and L\'evy, who gave a system of interaction nets
to capture both optimal $\beta$-reduction~\cite{LevyJJ:phd} in the
$\lambda$-calculus~\cite{GonthierG:geoolr} (Lamping's
algorithm~\cite{LampingJ:algolc}), and optimal reduction for
cut-elimination in linear logic~\cite{GonthierG:linlwb}.

\item There is growing evidence that interaction nets can provide
a platform for the development of parallel implementations,
specifically parallel implementations of sequential programming
languages. Using interaction nets as an \emph{object} language for a
compiler offers strong evidence that the programming language being
implemented may be executed in parallel (\emph{even if there was no
parallelism in the original program}). Evidence of this can be found
in~\cite{PintoJ:phd}.

\item Finally, it is a formalism where both programs and
data structures are represented in the same framework. In the light of
the first point above, we have a very powerful mechanism to study the
dynamics of computation.
\end{enumerate}

The above points are clear indicators that interaction nets have a
role to play in computer science every bit as important as the
roles the $\lambda$-calculus or term rewriting systems have played
over the last few decades.  The aim of the present work is to realise
some of the potential of interaction nets through the development of 
practical programming constructs which simplify the task of
programming.

\subsection{Interaction rules with nested patterns - INP}
The above definition of interaction nets constraints pattern
matching to exactly one argument at a time. Consequently, we have
to introduce auxiliary agents and rules to perform deep pattern
matching (as exemplified in Figure~\ref{rules_for_lastelm}). 
Following~\cite{HaSa08}, an interaction rule may contain a \emph{nested
active pair} with more than two agents on it's left-hand side (lhs). 
A nested active pair is defined inductively as follows:
\begin{itemize}
\item Every active pair in ordinary interaction rules (ORN) is a nested
  active pair e.g. \\
$P=\left< \alpha(x_1,\ldots,x_n) \join \beta(y_1,\ldots,y_m)\right>$
\item A net obtained as a result of connecting the principal port
  of some agent $\gamma$ to a free port $y_j$ in a nested active pair $P$ is also
  a nested active pair e.g.
$\left< P, y_j \sim \gamma(z_1,\ldots,z_l)\right>$
\end{itemize}
As an example,
Figure~\ref{INP_lastelm} gives a  set
of INP rules that will compute the last element of a list.
In this Figure both rules contain a nested active pair on the lhs.
The (non interacting) agents {\tt Nil} and Cons 
on the lhs of the rules are nested agents.
These rules are compiled into the set of 
ORN rules given in Figure~\ref{rules_for_lastelm}
(See~\cite{HaSa08} for details of the compilation).

\begin{figure}[hth]
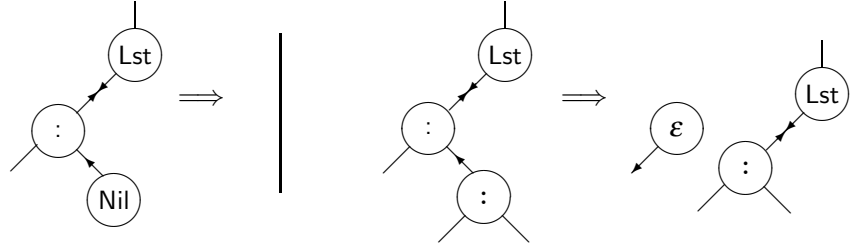

\begin{net}{320}{85}
    \put(0,15){
      \putagent{45}{52}{\agentt{Lst}}
      \putagent{16}{23}{\agentt{:}}
      \putline{8}{18}{-1}{-1}{10}
      \putline{23}{16}{1}{-1}{10}
      \putagent{37}{-3}{{\nil}}
      \putvector{26}{13}{-1}{1}{0}
      \putVline{45}{62}{10}
      \putline{38}{45}{-1}{-1}{15}
      \putvector{31}{38}{-1}{-1}{0}
      \putvector{31}{38}{1}{1}{0}
      \puttext{70}{35}{$\Lra$}
      \putVline{100}{0}{60}
    }
    \put(140,15){
      \putagent{45}{52}{\agentt{Lst}}
      \putagent{16}{24}{\agentt{:}}
      \putline{9}{17}{-1}{-1}{10}
      \putline{23}{17}{1}{-1}{10}
      \putagent{37}{-2}{{\tt :}}
      \putline{30}{-9}{-1}{-1}{10}
      \putline{44}{-9}{1}{-1}{10}
      \putvector{26}{14}{-1}{1}{0}
      \putVline{45}{62}{10}
      \putline{38}{45}{-1}{-1}{14}
      \putvector{31}{38}{-1}{-1}{0}
      \putvector{31}{38}{1}{1}{0}
      \puttext{75}{35}{$\Lra$}
      \putagent{110}{24}{$\epsilon$}
      \putvector{103}{17}{-1}{-1}{10}
      \put(120,-15){
        \putagent{45}{52}{{\agentt{\sf Lst}}}
        \putagent{16}{24}{{\tt :}}
        \putline{8}{18}{-1}{-1}{10}
        \putline{23}{17}{1}{-1}{10}
        \putVline{45}{62}{10}
        \putline{38}{45}{-1}{-1}{14}
        \putvector{31}{38}{-1}{-1}{0}
        \putvector{31}{38}{1}{1}{0}
}}
  \end{net}
  \caption{example INP rules to compute the last element of a list}
  \label{INP_lastelm}
\end{figure}

To ensure that INP preserves the Strong Confluence property of
interaction nets, rules in INP must satisfy the following
constraints~\cite{HaSa08}:

\begin{definition}[Sequentiality]\label{sequentiality}
Let $P$ be a nested active pair.
  The set of nested active pairs $\mathcal{P}$ is 
  sequential iff when 
$\left< P, y_j \sim \beta(x_1,\ldots,x_n)\right> \in \mathcal{P}$
  then
  \begin{enumerate}
  \item for the nested pair $P$, $P \in \mathcal{P}$ and, 
  \item for all the free ports $y$ in $P$ except the $y_j$ and for
    all agents $\alpha$, 
    $\left<P,y \sim \alpha(w_1,\ldots,w_n) \right> \notin \mathcal{P}$
  \end{enumerate}
\end{definition}
\begin{definition}[Well-formedness]\label{wfdef}
  A set of INP rules $\IR$ is well-formed iff
  \begin{enumerate}
  \item there is a sequential set of nested active pairs which contains every
  lhs of rules in $\IR$,
  \item for every rule $P \ito N$ in $\IR$, there is no interaction rule
    $P' \ito N'$ in $\IR$ such that $P'$ is a subnet of $P$.
  \end{enumerate}
\end{definition}
%
Intuitively, the Sequentiality property avoids overlaps between rules:
a set of INP rules containing nested patterns that violate
condition 2 of definition~\ref{sequentiality} can give rise to critical pairs,
which potentially destroys the Strong Confluence property of interaction nets.
Note that the definition of Sequentiality allows a nested active pair to
be a subnet of another nested active pair (in the same sequential set) which
may also give rise to critical pairs.
Definition~\ref{wfdef} ensures that there is at most one nested active pair
in any given set of rules.

\section{The Language}\label{inets}

We represent nets in the usual way as a comma
separated list of agents. This just corresponds to a flattening of the
net, and there are many different (equivalent) ways to do this
depending on the order in which the agents are enumerated. 
Using the net in Figure~\ref{net_for_lastelm} as an example
we write: 
\begin{program}
  p\(\sim\)Lst(r),p\(\sim\)Cons(x,xs),x\(\sim\)1,xs\(\sim\)Nil
\end{program}
The symbol `$\sim$' denotes the principal port of the agent. 
The variables \texttt{p},
\texttt{x} and \texttt{xs} are used to model the connection between two
ports.
All variable names occur at most twice: this limitation corresponds to
the requirement that it is not possible to have two edges connected to
the same port in the graphical setting. If a name occurs once,
then it corresponds to the free ports of the net ($r$ is free in the above).
If a name occurs twice, then it represents an edge between two ports. In this
latter case, we say that a variable is \emph{bound}. 

The syntax above can be simplified by replacing equals for equals:
\begin{program}
  Lst(r)\(\sim\)Cons(1,Nil)
\end{program}
In this
notation the general form of an active pair is 
$\alpha(\ldots)\sim \beta(\ldots)$.

We represent rules by writing $l \Lra r$, where $l$ is the net on the
left of the rule, and $r$ is the resulting net. In particular, we note
that $l$ will always consist of two agents connected at their
principal ports. As an example, 
the rules in Figure~\ref{rules_for_lastelm} are written as:

\begin{program}
   Lst(r) >< Cons(x,xs) \(\Lra\) xs\(\sim\) Aux(x,r)
   Aux(x,r) >< Nil \(\Lra\) x\(\sim\)r
   Aux(x,r) >< Cons(y,ys) \(\Lra\) Lst(r)\(\sim\)Cons(y,ys), \(\epsilon\sim \)x
\end{program}
The names of the bound variables in the two nets must be disjoint, and
the free variables must coincide, which corresponds to the condition
that the free variables must be preserved under reduction.  
Note that we use the symbol `\texttt{><}' for the active pair of the rule
so that we can distinguish between an active pair and a rule.

We can simplify only \emph{optimised} ORN rules i.e. rules that do not contain
active pairs on their right-hand side (rhs).
For example we can write the set 
of rules above in a simplified form:
\begin{program}
   Lst(r) >< Cons(x,Aux(x,r))
   Aux(x,x) >< Nil
   Aux(x,r) >< Cons(y,ys) \(\Lra\) Lst(r)\(\sim\)Cons(y,ys), \(\epsilon\sim \)x
\end{program}
Note that once an optimised ORN rule is simplified, it's rhs becomes empty 
and therefore we omit the `$\Lra$'.

We represent INP rules using a similar mechanism and allow the lhs
of a rule to contain more than two agents. We restrict the simplification
of INP rules such that the lhs and rhs are simplified independently of
one another.
As an example, the set of INP rules in Figure~\ref{INP_lastelm} can be 
written as:
\begin{program}
   Lst(r) >< Cons(x,xs), xs\(\sim\)Nil \(\Lra\) r\(\sim\)x
   Lst(r) >< Cons(x,xs),xs\(\sim\)Cons(y,ys) \(\Lra\) x\(\sim\epsilon\),p\(\sim\)Lst(r),p\(\sim\)Cons(y,ys)
\end{program}
or in a simplified form:
\begin{program}
   Lst(r) >< Cons(x,Nil)\(\Lra\) r\(\sim\)x
   Lst(r) >< Cons(x,Cons(y,ys))\(\Lra\) x\(\sim\epsilon\),Lst(r)\(\sim\)Cons(y,ys)
\end{program}

\paragraph{Other language constructs}
The \Pin system provides other language constructs --
modules,  built-in operations
on agent values, input/output e.t.c. These constructs 
remain unaffected by our extension and are out of scope
in this paper. See~\cite{MaHaSa08}
for a detailed description of the additional language features.

\section{Translation}\label{implementation_overview}
In this Section we give an overview of our translation. 
In general, the \Pin compiler reads programs in our source language
and builds the corresponding abstract syntax tree (AST).
On the basis of the AST, the \Pin compiler generates some byte codes 
which can be executed by an abstract machine or be further compiled into
C source code. See~\cite{MaHaSa08} for  a more detailed presentation
of the \Pin system.

Our translation function rewrites ASTs that represent INP rules
into ASTs that represent ordinary interaction rules.
Therefore, the back end of the \Pin system remains unaffected 
by the translation.
Overall, our translation function is similar to the
compilation schemes defined in the original paper \cite{HaSa08}.
We  summarise the translation algorithm in the following
steps:



\begin{enumerate}
\item A rule is found in the AST. This rule can be either ORN or INP. All
  other nodes of the AST that are not rules (imports, variable
  declarations,\ldots) are ignored.
\item Check if the lhs is not a subnet of a previously translated
  lhs (and vice versa if the rule is INP). 
  This is the first part of verifying the well-formedness
  property (Definition~\ref{wfdef}). We discuss this verification in
  Section~\ref{subnetprop}.
\item If the rule is not INP, return.
\item If the rule is INP: check if the current and all previous 
  nested active  pairs can be added to a sequential set. 
  This is the second part of verifying
  the well-formedness property (Section~\ref{sequentialprop}).
\item If both checks are passed, translate the rule (else, exit with an
  error message):
  \begin{enumerate}
  \item Resolve the first nested agent of the rule's active pair.
  \item Add an auxiliary rule to the AST.
  \item The remaining nested agents are not (yet) translated. 
    They are resolved by translating the auxiliary rule.
  \end{enumerate}
  We describe the translation algorithm in Section~\ref{rule_translation}.
\item traverse the AST until the next (unprocessed) rule is found.
\end{enumerate}
This algorithm allows for an arbitrary number of 
nested patterns (i.e., the number of nested agents in the lhs of an INP rule) 
and an arbitrary pattern depth.

\section{Verifying the Well-Formedness Property}
\label{wf_verification}
Our verification algorithm (see below) consists of two parts
which correspond to the two constraints of the well-formedness
property. We verify that the set of nested active pairs in  a given
\Pin program are both disjoint and sequential.

We use the notation $\verb|[]|$ for the empty list, 
$\verb|[| 1,\ldots,n\verb|]|$ for
a list of $n$ elements and $ps_1\mathtt{@}ps_2$ to append two lists.

\newcommand{\PS}{\mathrm{PositionSet}}
\newcommand{\Pos}{\mathrm{Pos}}
\newcommand{\PosSym}{\mathrm{PosSym}}


%
%
%
%

\begin{definition}[Position Set]
Let $l \Lra r$ be a rule in \Pin. 
We define the function $\PosSym(l)$ that given a nested active pair
will return  a set of 
pairs $(ps,u)$ where $ps$ is a list that represents the position
of a symbol $u$ in $l$.
\[
\begin{array}{lcl}  
  \PosSym(\alpha \verb|(| t_1 \verb|,| \ldots \verb|,| t_n \verb|)| 
  \ \verb|><| \
  \beta\verb|(| s_1 \verb|,| \ldots \verb|,| s_m \verb|)|)
  & = &
  \PosSym_t(\verb|[| 1,1 \verb|]|, \ t_1) \; \cup \, ... \, \cup 
   \PosSym_t(\verb|[| 1,n \verb|]|, \ t_n) \; \cup\\
   &&\PosSym_t(\verb|[| 2,1 \verb|]|, \ s_1) \; \cup \, \ldots \, \cup \;
   \PosSym_t(\verb|[| 2,m \verb|]|, \ s_m)\\
   \PosSym_t(\mathit{ps}, \ x) &=& \emptyset\\

   \PosSym_t(\mathit{ps}, \ \alpha\verb|(| \bar{x} \verb|)|) 
   & = & 
   \{(\mathit{ps}, \alpha)\} \\
   %
   %
   \PosSym_t(\mathit{ps}, \ \alpha\verb|(| t_1\verb|,| \ldots
   \verb|,| t_n \verb|)|) 
   & = & 
   \PosSym_t(\mathit{ps} \, \verb|@| \, \verb|[| 1 \verb|]|, \ t_1) \;
   \cup \, ... \, \cup \;
   \PosSym_t(\mathit{ps} \, \verb|@| \, \verb|[| n \verb|]|, \ t_n)\\
   %
   %
\end{array}
\]
where the
sequence of terms $t_1,\ldots,t_n$ in 
$\PosSym_t(\mathit{ps}, \ \alpha\verb|(| t_1\verb|,| \ldots
 \verb|,| t_n \verb|)|)$ contain at least one term which is not a variable;
and $\bar{x}$ is a sequence of zero or more variables.



%

%
%
%

\noindent {}The function $\Pos(l)$ returns a set of lists
that represent the position of each nested agent in a nested active
pair:
\[
\begin{array}{lcl}
\Pos(l) & = & \pi_1(\PosSym(l))\\
\pi_1(\emptyset) & = & \emptyset,\\
\pi_1(\{(ps,s)\} \, \cup \, A) & = & \{ps\} \, \cup \, \pi_1(A-\{(ps,s)\}).\\
\end{array}
\]
\noindent{}We extend these operations into the sequence $l_1,...,l_k$
of  lhs of rules  as follows:
\[
\begin{array}{lcl}
  \PosSym(l_1,\ldots,l_k) 
  & = &
  \PosSym(l_1) \; \cup \, \ldots \, \cup \; \PosSym(l_k),\\

  \Pos(l_1,\ldots,l_k) &= &\Pos(l_1) \; \cup \, ... \, \cup \; \Pos(l_k)\\
\end{array}
\]

\end{definition}

\newtheorem{example_positionset}[definition]{Example}
\begin{example_positionset}
For each rule in Figure~\ref{INP_lastelm}, we can get sets of positions of
nested agent pairs as follows:
\begin{itemize}
\item $\PosSym(\verb|Lst(r) >< Cons(x,Nil)|)$

\quad $= \ \PosSym_t(\verb|[| 1,1 \verb|]|, \verb|r|) \, \cup \, 
 \PosSym_t(\verb|[| 2,1 \verb|]|, \verb|x|) \, \cup \, 
 \PosSym_t(\verb|[| 2,2 \verb|]|, \verb|Nil|) \ = \ 
 \{ (\verb|[| 2,2 \verb|]|, \verb|Nil|) \}$

\item $\Pos(\verb|Lst(r) >< Cons(x,Nil)|) \ = \ \{ \verb|[| 2,2 \verb|]| \}$

\item $\PosSym(\verb|Lst(r) >< Cons(x,Cons(y,ys))|)$

\quad $= \ \PosSym_t(\verb|[| 1,1 \verb|]|, \verb|r|) \, \cup \, 
 \PosSym_t(\verb|[| 2,1 \verb|]|, \verb|y|) \, \cup \, 
 \PosSym_t(\verb|[| 2,2 \verb|]|, \verb|Cons(y,ys)|) \ = \ 
 \{ (\verb|[| 2,2 \verb|]|, \verb|Cons|) \}$

\item $\Pos(\verb|Lst(r) >< Cons(x,Cons(y,ys))|) \ = \ \{ \verb|[| 2,2 \verb|]| \}$

\end{itemize}
\end{example_positionset}
\subsection{Subnet property} \label{subnetprop}
Verifying the subnet property is straightforward. 
Since rules are represented as trees (subtrees of the AST), 
it is easy to verify if one rule's lhs is a subtree of another.
We compute the lhs subtree relation of the current rule $P$ (to
be translated) against all the rules $Q$ which have already been translated.
If $P$
is in ORN, we verify the subtree relation in only one direction: 
the lhs of an INP rule cannot be a subnet of the lhs of an ORN rule.
Otherwise we verify the subtree relation in both directions:
$P$ against $Q$ and $Q$ against $P$.
The case of two ORN
rules with the same active agents is handled by the compiler at an 
earlier stage.
If the current rule's lhs is not a subnet of any previous rules' 
lhs's, we add it to the set of previous rules.

Note that we consider a tree to be a subtree of another tree up to alpha
conversion, i.e., variable names are not considered.

\subsection{Sequential set property} \label{sequentialprop}
The check for the sequential set property is a bit more complicated 
than the subnet one. According to the definition of the 
well-formedness property, 
there must exist a sequential set that 
contains all nested active pairs in a given set of INP
rules. 
Rather than attempting to construct such a sequential set, 
the algorithm tries to falsify this condition: 
it searches (exhaustively) for two nested patterns
that cannot be in the same sequential set. This is done as follows:

For the current nested pattern $P$ and all previously verified 
patterns $Q$ with the same active agents:
\begin{enumerate}
\item Compare the sets $\Pos(P)$ and $\Pos(Q)$.
  We only consider the positions of agents at this point, not the agents
  themselves.
\item If one set is a subset of another, $P$ and $Q$ can be added to a
  sequential set 
  \footnote{Note that $P$ and $Q$ have already passed the subnet check at
    this point. This means that (some of) the nested agents at the common
    positions are different. Hence, $P$ and $Q$ cannot give rise to 
    a critical
    pair.}. $P$ is added to the set of previous nested patterns.
\item Else, we compare the actual nested agents at the common positions
$CP=\Pos(P) \cap \Pos(Q)$.

\item If for each element $p \in CP$, $\alpha$ and $\beta$ are the same
where $(p,\alpha) \in \PosSym(P)$ and $(p,\beta) \in \PosSym(Q)$,
no sequential set can contain $P$ and $Q$, as $P \equiv \langle M,
x\sim\alpha(\ldots),\textbf{a} \rangle, Q \equiv \langle M,
  y\sim\beta(\ldots) ,\textbf{b}\rangle$ with $x \neq y$

\item Else, $P$ and $Q$ can be added to a sequential set. $P$ is added to the
  set of previously verified nested patterns.
\end{enumerate}

\paragraph*{}
It is straightforward to see that after the full traversal of the AST, 
all possible pairs
of nested patterns are considered. Hence, the search for a pair that violates
the sequential set property is exhaustive.

\theoremstyle{theorem}
\newtheorem{wfcorrect}[definition]{Proposition}
\begin{wfcorrect}
  Let $R$ be a set of INP rules.  
  $R$ is well-formed $\Leftrightarrow$ $R$ is correctly verified to be
well-formed using our verification algorithm.
\end{wfcorrect}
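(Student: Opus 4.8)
The plan is to break the biconditional apart along the two conjuncts of the well-formedness property (Definition~\ref{wfdef}), matching them to the two halves of the algorithm: the subnet check of Section~\ref{subnetprop} is meant to decide condition~2, and the sequential-set check of Section~\ref{sequentialprop} is meant to decide condition~1. So it suffices to prove two lemmas --- ``the subnet check reports no violation iff no left-hand side of a rule in $R$ is a subnet of another'' and ``the sequential-set check reports no violation iff there is a sequential set of nested active pairs containing every left-hand side of $R$'' --- and conjoin them. For both, the first thing to pin down is \emph{exhaustiveness}: since the driver loop of Section~\ref{implementation_overview} scans the whole AST and, for each rule it encounters, runs the relevant check against every previously processed rule, every unordered pair of rule left-hand sides is eventually fed to the subnet check, and every pair of nested patterns with the same pair of active agents is eventually fed to the sequential-set check (cf.\ the remark at the end of Section~\ref{sequentialprop}).

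For the subnet check I would argue that its local decision is correct. A rule's left-hand side is stored as a subtree of the AST, and, reading the concrete syntax of Section~\ref{inets}, this subtree determines the net up to renaming of bound and free variables, i.e.\ up to $\alpha$-conversion; hence the tree relation the algorithm computes (``subtree, up to $\alpha$-conversion'') coincides with ``is a subnet of''. It then only remains to justify the asymmetry of the check: when the current rule is ORN, the subtree relation is tested in just one direction. This is sound because an ORN left-hand side has exactly two agents whereas an INP left-hand side has strictly more, so an INP pattern can never be a subnet of an ORN pattern, and two distinct ORN left-hand sides with different active pairs are never subnets of one another (the remaining case of two ORN rules with the same active pair is rejected by the compiler at an earlier stage, as noted). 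Combining this with exhaustiveness, the subnet check reports a violation exactly when some left-hand side of $R$ is genuinely a subnet of another, which is the negation of condition~2.

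The sequential-set check is the substantial part, and I would organise it around two facts. First, a \emph{gluing lemma}: a set of nested active pairs that contains every left-hand side of $R$ can be completed to a sequential set if and only if, for every pair $P,Q$ of left-hand sides sharing the same active agents, the two-element set $\{P,Q\}$ can be completed to a sequential set. The forward direction is immediate, and for the converse the key observation is that the only way a set can fail to extend to a sequential set is to already contain, or be forced by prefix-closure to contain, a nested active pair $M$ extended at two distinct free ports --- condition~2 of Definition~\ref{sequentiality} --- and such an obstruction, if present, is already witnessed by the prefix closure of a single pair of left-hand sides (patterns with distinct active pairs never interfere, so the group-wise decomposition is sound). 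To make ``sub-pattern of'' and ``the port at which two patterns branch'' well-defined despite nested agents being attachable in different orders, I would carry out this part using the position-labelled presentation ($\PosSym$ and $\Pos$), for which the common prefix $M$ of $P$ and $Q$ is canonically the net carrying exactly the agents at the positions on which $P$ and $Q$ agree --- a legitimate nested active pair, since $\Pos(P)$ and $\Pos(Q)$ are downward closed and so is the agreeing part of their intersection. Second, with the lemma available, I would verify that the test of steps~1--5 computes exactly ``$\{P,Q\}$ completes to a sequential set'' for a pair that has already survived the subnet check: if $\Pos(P)\subseteq\Pos(Q)$ or conversely (step~2), survival of the subnet check forces $P$ and $Q$ to carry different agents at some common position, so they first diverge from $M$ at a single free port of $M$, and building $P$ and $Q$ as chains that each extend $M$ first at that port gives a sequential set, so the verdict ``compatible'' is correct; if $\Pos(P),\Pos(Q)$ are incomparable and the agents disagree somewhere on $CP=\Pos(P)\cap\Pos(Q)$, the same chain-building argument applies and ``compatible'' is again correct; and if the agents agree on all of $CP$, then $P$ must extend $M$ at a free port not used by $Q$ and $Q$ at one not used by $P$, so any prefix-closed set containing both $P$ and $Q$ is forced to extend $M$ at two distinct ports, violating condition~2 --- hence ``incompatible'' is correct. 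Conjoining exhaustiveness, the gluing lemma and this case analysis yields the second lemma, and conjoining the two lemmas yields the proposition.

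The step I expect to be the real obstacle is the gluing lemma together with the ``agents agree on all of $CP$'' case of the step analysis: both hinge on showing that the pairwise obstructions really are the only obstructions, i.e.\ that any completion of $\{P,Q\}$ to a sequential set is forced --- in order to avoid earlier clashes --- to build up the whole common prefix $M$ along both prefix chains and then, since the extension ports of $M$ used by $P$ and by $Q$ are disjoint, to violate condition~2 of Definition~\ref{sequentiality} at $M$. Committing early to the position-labelled view of nested active pairs, and restating ``sequential'' directly in terms of downward-closed sets of positions with at most one branching port at each node, is the device I would use to tame the otherwise awkward non-canonicity of the order in which nested agents are introduced. A lesser but real point is that the ``$\subseteq$'' case of the pairwise test is correct only because the subnet check has already run: without it, $P$ could be a genuine sub-pattern of $Q$ attached at a fresh port, which is a subnet violation rather than a sequentiality one, so the interplay between the two checks has to be stated carefully.
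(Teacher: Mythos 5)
Your proposal follows the same top-level decomposition as the paper's proof --- split the biconditional along the two conditions of Definition~\ref{wfdef}, observe that the driver loop feeds every relevant pair of left-hand sides to the corresponding check, and argue that each pairwise check decides the corresponding local violation --- but it is a substantially more rigorous elaboration of it. The paper's own proof is a four-case contradiction argument that simply \emph{asserts} the two facts you identify as the real work: (i) that the pairwise checks are exhaustive and locally correct, and (ii) that a failure of condition~1 of well-formedness necessarily manifests as a single pair of rules of the form $A \equiv \langle P, x\sim\alpha(\ldots)\rangle$, $B \equiv \langle P, y\sim\beta(\ldots)\rangle$ with $x \neq y$. Point (ii) is exactly your gluing lemma (pairwise completability to a sequential set implies global completability), and the paper takes it for granted: its Case~2 of the $\Leftarrow$ direction jumps directly from ``$R$ is not well-formed with respect to sequentiality'' to the existence of such a pair $(A,B)$, without justifying why pairwise conflicts are the only obstruction, nor why the common prefix $P$ and the branching port are well-defined given that nested agents can be attached in different orders. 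Your device of restating sequentiality in terms of downward-closed position sets with at most one branching port per node, and your observation that the ``$\subseteq$'' case of the positional test is only sound because the subnet check has already run, are genuine additions that the paper leaves implicit. In short: same skeleton, but your version supplies the lemmas the paper's proof silently depends on; what the paper's terser argument buys is only brevity.
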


\begin{proof}[Proof]
  
  $\Leftarrow:$
  
  Assume $R$ is not well-formed and passes the well-formedness checks. 
  We proceed by a complete case distinction (according to the
  definition of well-formedness):
  
  \begin{description}
  \item[Case 1.] 
    There exist two rules $ P\ito N, Q \ito M \in R $ where $P$ is
    a subnet of $Q$. But then, the pair $(P,Q)$ is tested for 
    the subnet relation (Section~\ref{subnetprop}). 
    Hence, $R$ does not pass the 
    well-formedness check.
  \item[Case 2.] 
    There exist two rules $A\ito N,B \ito M \in R$ where 
    $A \equiv \langle P, x\sim\alpha(\ldots) \rangle, 
    B \equiv \langle P, y\sim\beta(\ldots)
    \rangle$ for $x \neq y$. 
    But since all pairs of nested patterns are checked
    for the sequential set property (Section~\ref{sequentialprop}), 
    $(A,B)$ will be detected. 
    Hence, $R$ does not pass the check.
  \end{description}
  
  In both cases, we reach a contradiction to the assumption above, 
  hence it cannot be true. \\

  $\Rightarrow:$
  
  Assume $R$ does not pass the well-formedness check, 
  but is well-formed. Again, there are only two cases:
  
  \begin{description}
  \item[Case 1.] 
    R does not pass the check because $\exists P\ito N, Q \ito N'
    \in R $ where $P$ is a subnet of $Q$. 
    But then, $R$ is not well-formed (by the definition of well-formedness).
    
  \item[Case 2.] 
    R does not pass because are two rules $A \ito N,B 
    \ito M \in R$ where 
    $A \equiv \langle P, x\sim\alpha(\ldots) \rangle, B \equiv
    \langle P, y\sim\beta(\ldots) \rangle$ for $x \neq y$. 
    Then, there is no sequential set that contains both
    $A$ and $B$. Hence, $R$ is not well-formed.
  \end{description}

  Again, we reach a contradiction to the assumption above in either case.
  
\end{proof}

\section{Rule Translation}\label{rule_translation}
We now describe the translation algorithm in more detail. 
As mentioned earlier, we translate INP rules to ORN rules by
rewriting the AST.
We perform a pre-order traversal of the AST and identify nodes
that represent INP rules.
Once we find an INP rule, we replace its nested agents with a fresh
(variable) node $n$ and replace the subtree that represents the
rhs of the rule with a new tree $N_t$. 
The nested agents and the rhs of the rule are 
stored for later processing. The tree $N_t$ represents an active pair
between $n$ and a newly created auxiliary agent $Aux$.
This auxiliary agent holds all the agents and attributes of the 
original active pair, with the exception of the former variable agent.

Now, we create an auxiliary rule with an active pair between
$Aux$ and the current nested agent
(initially connected to the interacting agent).
We set the rhs of the auxiliary rule to be the rhs of the original INP rule.
Finally, we add this auxiliary rule to the system.

Note that the auxiliary agent in the new rule may still contain additional
nested agents, i.e., the auxiliary rule may be INP. 
Hence,  the translation algorithm recursively translates the generated rules
until the lhs of each of the generated rules contains exactly two agents.
The idea behind this is to resolve one nested agent per translation pass.
Further nested agents are processed when the translation function reaches the
respective auxiliary rule(s).

We can formalise the translation algorithm as a function
$translate(\mathcal{R},U,S)$, where $\mathcal{R}$
denotes the input set of interaction rules and $U$ and $S$ are \emph{stores}.
Intuitively, the components $U$ and $S$ are used to store
previously processed rule patterns in order to verify the subnet and 
sequential set properties respectively. 
The function $translate$ is defined as follows (in pseudo-code
notation), where \texttt{FAIL} denotes termination of the program due to non
well-formedness of $\mathcal{R}$:

\begin{program}
  translate([],\(U\),\(S\)) = [] 
  translate((P\(\ito\)N):\(\mathcal{R}\),\(U\),\(S\)) = 
  if (P is a subnet of any Q \(\in U\) or vice versa) 
  \indent FAIL 
  else if (P is ORN) 
  \indent (P\(\ito\)N):translate(\(\mathcal{R}\),P:\(U\),\(S\)) 
  else 
  \indent if (P cannot be added to a sequential set with 
  any Q \(\in S\))
  \indent \indent FAIL
  \indent else
  \indent (P'\(\ito\)(PX\(\sim\)p)):translate((PX >< A \(\ito\) N):\(\mathcal{R}\),P:\(U\),P:\(S\)) 
  \indent \indent where
  \indent \indent \indent p = position of the first nested agent of P
  \indent \indent \indent A = the nested agents at position p
  \indent \indent \indent P' = P with all nested agents replaced by variables
  \indent \indent \indent \indent \indent \indent \indent \indent ports
  \indent \indent \indent PX = auxiliary agent that contains all ports of P
  \indent \indent \indent \indent \indent \indent \indent \indent except p
\end{program}

\theoremstyle{definition}
\newtheorem{translation_example}[seqdef]{Example}
\begin{translation_example}\label{translation_example}
Consider the interaction rules from Figure~\ref{INP_lastelm}. $\mathcal{R}$
consists of two rules:
\begin{program}
   1. Lst(r) >< Cons(x,Nil) \(\Lra\) r\(\sim\)x
   2. Lst(r) >< Cons(x,Cons(y,ys)) \(\Lra\) x\(\sim\epsilon\),Lst(r)\(\sim\)Cons(y,ys)
\end{program}
The translation works as follows:
\begin{itemize}
  \item Rule 1 is INP (due to the {\tt Nil} agent).\\
  Its lhs is checked for the subnet property. As there are no previous
  rules in $U$, the check is passed.
  \item As the rule is INP, it is checked for the sequential set property.
  Again, there are no previous rules in $S$, hence it passes the check.
  \item The rule is transformed, introducing the auxiliary
  agent {\tt Lst\_Cons}

  		\begin{program}
         1. Lst(r) >< Cons(x,var0) \(\Lra\) Lst_Cons(r,x)\(\sim\)var0
      \end{program}
  \item A new auxiliary rule is added to $\mathcal{R}$:
  \begin{program}
         3. Lst_Cons(r,x) >< Nil \(\Lra\) r\(\sim\)x
      \end{program}
		The lhs pattern of rule 1 is added to $U$ and $S$.
  \item Rule 2 is INP (due to the nested {\tt Cons} agents).\\
  Its lhs is checked for the subnet property. The only lhs
  pattern in $U$ is {\tt Lst(r) >< Cons(x,Nil)}. Due to different agents at the
  second auxiliary port of {\tt Cons}, the lhses cannot be subnets of one another. The
  check is passed.
  \item Rule 2 is checked for the sequential set property. First, the positions
  of nested agents of Rule 1 and 2 are compared. Since they are the same (both
  have their nested agent at the second auxiliary port of {\tt Cons}), they can
  be added to a sequential set. There are no further rules in $S$, hence the
  check is passed.
  \item The rule is transformed and another auxiliary rule is added to
  $\mathcal{R}$:
  \begin{program}
   2. Lst(r) >< Cons(x,var1) \(\Lra\) Lst_Cons(r,x)\(\sim\)var1
   4. Lst_Cons(r,x) >< Cons(y,ys) \(\Lra\)  x\(\sim\epsilon\),Lst(r)\(\sim\)Cons(y,ys)
	\end{program}
  \item Rule 3 is ORN.\\
  Its lhs is checked for the subnet property. As there are no with the same
  active agents in $U$, the check is passed. The lhs pattern of Rule 3 is added
  to $U$.
  \item Rule 4 is ORN.\\
  Its lhs is checked for the subnet property. Again, there are no rules with
  the same active agents in $S$. The check is passed and the lhs pattern is
  added to $U$.
\end{itemize}
This yields the translated set of rules
\begin{program}
         1. Lst(r) >< Cons(x,var0) \(\Lra\) Lst_Cons(r,x)\(\sim\)var0
         2. Lst(r) >< Cons(x,var1) \(\Lra\) Lst_Cons(r,x)\(\sim\)var1
         3. Lst_Cons(r,x) >< Nil \(\Lra\) r\(\sim\)x
         4. Lst_Cons(r,x) >< Cons(y,ys) \(\Lra\)  x\(\sim\epsilon\),Lst(r)\(\sim\)Cons(y,ys)
\end{program}
Note that the rules 1 and 2 are identical (save variable names). Since only one
of them is needed, rule 2 is discarded by \Pin.
As expected, we get the set of ORN rules given  in Figure~\ref{rules_for_lastelm}.
\end{translation_example}

\newtheorem{translate_terminates}[seqdef]{Proposition}
\begin{translate_terminates}
\textbf{(Termination) } For a finite $\mathcal{R}$, $translate$ terminates.
\begin{proof}
	Let $n$ be the number of rules in $\mathcal{R}$ and $p$ be the sum of all
	nested agents of these rules. By a complete case distinction, we show that with
	each recursive call, $(n+p)$ decreases:
	\begin{description}
	\item[Case 1] The current rule is ORN. At the recursive call, 
          it is removed
	from $\mathcal{R}$, hence $n$ decreases by 1.
	\item[Case 2] The current rule has $i$ nested agents ($i>0$). 
          The rule is
	removed from $\mathcal{R}$ and an auxiliary rule with 
        exactly $i-1$ nested
	agents is added to $\mathcal{R}$. Hence, with the recursive 
        call $p$ decreases by 1.
	\end{description}
$translate$ terminates if $n=0$. $n$ only decreases if we encounter an ORN
rule. Yet, since the number of nested agents for each rule is finite, all rules
in $\mathcal{R}$ will be ORN after finitely many decreases of $p$.
\end{proof}
\end{translate_terminates}

\subsection{Time and space complexity of
$translate$}\label{translate_complexity}
\paragraph{Time complexity}
To determine the time complexity of $translate$, we analyze the three main
parts of the algorithm: the check for the subnet property, the check for the
sequential set property and the actual rule translation. As a measurement of
the input size, we consider the sum $n = r + a$, where $r$ is the number of
rules and $a$ is the number of nested agents in the input set
$\mathcal{R}$. The idea behind this notion of size is that $translate$ is invoked 
once for every interaction rule in the input. Additionally, each
elimination of a nested agent yields a new auxiliary rule to be translated.
Hence, $translate$ is called $n=r+a$ times.

The subnet property check compares all pairs of rule patterns. For an
input set of size $n$, $\frac{n(n-1)}{2}$ checks are performed. Therefore, the
verification of the subnet property is quadratic to the input size, or $O(n^2)$.

At first glance, the check for the sequential set property is similar to the
one for the subnet property: all pairs of rule patterns need to be compared,
resulting in a (worst-case) complexity of $O(n^2)$. However, the verification
algorithm only considers pairs of patterns with the same active agents (see
\ref{sequentialprop}), which is usually only a fraction of all pairs of rule
patterns. On average, this results in a less than quadratic complexity for the 
verification of the sequential set
property.

The actual translation is executed once for each nested agent in the input set.
The elimination of a single nested agent is not influenced by the total
number of nested agents or rules. The rule translation is thus linear to the
number of nested agents $r$.

This results in an overall time complexity of $O(n^2)$ for $translate$, which is
mostly determined by the verification of the subnet property. Extended
profiling on the implementation of the algorithm (including input sets
with up to several thousand nested agents and rules) shows that also on
practical cases the performance is bounded by a quadratic curve (with small constants involved).

\paragraph{Space complexity }
To analyze the space complexity of $translate$, we consider the population of
the stores $U$ and $S$. Every rule pattern of the input set is stored in $U$,
whereas only nested patterns are stored in $S$. At the worst case, $U$ and $S$
will contain $2n$ rule patterns. This implies that space complexity is linear
to the number of rules in the input. Again, this result is reflected by
several profiling tests using practical examples.

\subsection{Additional language features}\label{language_features}

The \Pin language offers some features that are not considered in the
original definition of the nested pattern translation function. Some important
examples are data values of agents (integers, floats, strings,\ldots),
side effects (declaration and manipulation of variables, I/O) and conditions.
These features are not involved in the process of nested
pattern matching. Therefore, they do not need to be processed or changed by the
translation function:
\begin{itemize}
\item with regard to nested pattern matching, data values can be considered as
  variable ports (they do not contain nested agents). Hence, they are unaffected
  by the translation.
\item conditionals and side effects only occur in the rhs of a rule. Since the
  original rhs of an INP rule is propagated to the final auxiliary rule without a
  change, these features are not affected either.
\item all auxiliary rules but the last one are responsible for pattern
  matching only, they do not do the ``actual work'' of the original rule. All
  special language features are simply passed to the next auxiliary rule.
\end{itemize}

\subsection{The implementation}
We have developed a prototype implementation which
can be obtained from the project's web page
\url{http://www.interaction-nets.org/}.
We have thoroughly tested the prototype implementation and 
developed several example modules. These examples include 
rule systems with a large number and depth of 
nested patterns as well as heavy use of state, conditionals
and I/O. Additionally, we have designed several non well-formed systems 
in order to
improve error handling. Some of these examples can be found in the current
\Pin distribution at the project's web page.

\section{Conclusion}\label{conclusion}
We have presented an implementation for nested pattern matching of interaction
rules. The implementation closely follows the definition of nested patterns and
their translation to ordinary patterns introduced in ~\cite{HaSa08}.
We have shown nice properties of the algorithm such as its correctness and
termination.

The resulting system allows programs to be expressed in a more convenient way
rather than introducing auxiliary agents and rules to pattern match nested 
agents. We see this as a positive step
for further extensions to interaction nets: future
implementations of high-level language constructs can be built upon these more
expressive rules.



\bibliographystyle{plain} 
\bibliography{bibfile}

\end{document}